\theoremstyle{definition}%
\newtheorem{yosei}{Requirement}
\newtheorem{df}{Definition}
\newtheorem{teiri}{Theorem}
\newtheorem{example}{Example}
\newcommand\RedeclareMathOperator{%
  \@ifstar{\def\rmo@s{m}\rmo@redeclare}{\def\rmo@s{o}\rmo@redeclare}%
}
\newcommand\rmo@redeclare[2]{%
  \begingroup \escapechar\m@ne\xdef\@gtempa{{\string#1}}\endgroup
  \expandafter\@ifundefined\@gtempa
     {\@latex@error{\noexpand#1undefined}\@ehc}%
     \relax
  \expandafter\rmo@declmathop\rmo@s{#1}{#2}}
\newcommand\rmo@declmathop[3]{%
  \DeclareRobustCommand{#2}{\qopname\newmcodes@#1{#3}}%
}
\definecolor{grey}{rgb}{0.7, 0.75, 0.71}
\def\dark-red#1{\textcolor[rgb]{0.7,0.0,0.0}{#1}}
\definecolor{amber}{rgb}{1.0, 0.75, 0.0}
\def\...{\dotsc}
\def\intT2{\int_{-T/2}^{T/2}}
\def\sumi1n{\sum_{i=1}^{n}}
\def\sumi1N{\sum_{i=1}^{N}}
\def\sumi0N--{\sum_{i=0}^{N-1}}
\def\ccc{\cdots}
\def\=def{\overset{\text{\small def}}{=}}
\DeclarePairedDelimiterX{\inp}[2]{\langle}{\rangle}{#1, #2}
\def\Fb{\mathbb{F}}
\def\Zb{\mathbb{Z}}
\def\Ac{\mathcal{A}}
\def\Fc{\mathcal{F}}
\def\Qc{\mathcal{Q}}
\def\<{\langle}
\def\>{\rangle}
 \def\mat4#1#2#3#4{
\begin{pmatrix}
 #1&\ccc&#2\\
 \vdots&&\vdots\\
 #3&\ccc&#4
\end{pmatrix}}
\def\0sf{\mathsf{0}}
\def\1sf{\mathsf{1}}
\def\0BS{\boldsymbol{0}}
\def\1BS{\boldsymbol{1}}
\def\0B{\mathbf{0}}
\def\1B{\mathbf{1}}
\def\0H{\hat{0}}
\def\1H{\hat{1}}
\def\HH{\hat{H}}
\def\+TT{\texttt{+}}
\def\-{\texttt{-}}
\def\+KB{|+\> \<+|}
\def\-KB{|-\> \<-|}
\def\q0{|0\>}
\def\0U{\underline{0}}
\def\1U{\underline{1}}
\def\fU{\underline{f}}
\def\gU{\underline{g}}
\def\vU{\underline{v}}
\def\0UH{\underline{\0H}}
\def\1UH{\underline{\1H}}
\RedeclareMathOperator{\Im}{Im}
\newcommand{\T}{\mathsf{T}}
\begin{document}
\title{Quantum Error Correction with Girth-16 Non-Binary LDPC Codes via Affine Permutation Construction}
\author{
\IEEEauthorblockN{Kenta Kasai}
\IEEEauthorblockA{
Institution of Science Tokyo\\
Email: kenta@ict.eng.ist.ac.jp}
}
\maketitle

\begin{abstract}
We propose a method for constructing quantum error-correcting codes
based on non-binary low-density parity-check codes
with Tanner graph girth 16.
While conventional constructions using circulant permutation matrices
are limited to girth 12,
our method employs affine permutation matrices
and a randomized sequential selection procedure
to eliminate short cycles and achieve girth 16.

Numerical experiments show that the proposed codes
significantly reduce the number of low-weight codewords.
Joint belief propagation decoding over depolarizing channels reveals
that although a slight degradation appears in the waterfall region,
a substantial improvement is achieved in the error floor performance.

We also evaluated the minimum distance
and found that the proposed codes achieve a larger upper bound
compared to conventional constructions.
\end{abstract}
\IEEEpeerreviewmaketitle

\section{Introduction}

Recent advancements in quantum computing have enabled the construction of systems with tens of
reliable logical qubits from thousands of noisy physical qubits~\cite{preskill2018quantum}.

However, scaling quantum computations to thousands or more logical qubits remains a major challenge.
This underscores the critical need for highly efficient quantum error correction (QEC) techniques
capable of supporting such large-scale systems.

A promising approach to QEC, based on non-binary low-density parity-check (LDPC) codes, was proposed
in~\cite{6017122}.
This construction utilized circulant permutation matrices (CPMs), which, however, inherently limited
the Tanner graph girth to at most 12~\cite{1317123} and imposed strict constraints on submatrix
sizes \cite{4557323}.
These limitations significantly restricted design flexibility.

To overcome these challenges, a recent extension~\cite{komoto2024quantumerrorcorrectionnear}
replaced CPMs with general permutation matrices.
By applying joint belief propagation decoding, the resulting codes achieved performance close to the
hashing bound, with no observable error floor across a wide range of experiments.

In this paper, we push the boundary further by proposing a randomized construction of quantum LDPC
codes with girth~16, using affine permutation matrices (APMs).
Our method, detailed in Section~\ref{sec:proposed_construction}, enhances the girth while preserving
the desirable properties of previous constructions.

We validate our approach through numerical experiments conducted over depolarizing channels with
joint belief propagation decoding, as described in Section~\ref{sec:results}.
The proposed codes significantly reduce the number of low-weight codewords and improve error floor
performance, albeit with a slight degradation in the waterfall region compared to conventional
designs.
We also evaluate the minimum distance of the proposed codes.

\section{Permutation Matrices for LDPC Codes}

Permutation matrices play a central role in the construction of structured quantum LDPC codes.
In this section, we systematically define three important classes of permutation matrices:
general permutation matrices (PMs), APMs, and CPMs. 

\subsection{Permutation Matrices}
Let $P$ be a positive integer.
Define $\Zb_P = \{0, 1, \dots, P - 1\}$, and let $\mathcal{F}_P$ denote the set of all bijections
from $\Zb_P$ to itself.
The identity permutation is denoted by $\mathrm{id}$.
Each permutation $f \in \mathcal{F}_P$ is associated with a permutation matrix
$F \in \mathbb{F}_2^{P\times P}$ according to the rule:
\begin{align}
f(j) = i \quad \iff \quad F_{i,j} = 1.
\end{align}
We denote this correspondence by $f \sim F$, or simply write $f = F$ when the context is clear.

Throughout the paper, we will freely switch between viewing $f$ as a function and as its associated
matrix $F$.
The following properties hold for any $f, g \in \mathcal{F}_P$ and their corresponding matrices
$F, G$:
\[
f \circ g \sim FG, \quad F^\top \sim f^{-1}.
\]
\subsection{Affine and Circulant Permutation Matrices}
 For integers $a, b \in \Zb_P$, define an affine permutation $f: \Zb_P \to \Zb_P$ by
\[
f(j) = a j + b \mod P.
\]
It is known that $f \in \Fc$ if and only if $\gcd(a, P) = 1$~\cite{myung2006combining}.  
A permutation matrix arising from a permutation of this form is called an APM. We denote the set of all such matrices by $\Ac_P$. 
When $a = 1$, the affine permutation reduces to a simple shift: $f(j) = j + b \mod P$. The corresponding permutation matrices are called CPMs, and we denote their set by $\Qc_P$.

Let us illustrate a simple case of $P = 4$.
\begin{example}
 Let $a = 3$ and $b = 1$. Then
$ f(j) = 3j + 1,$ which yields:
$f(0) = 1,\quad f(1) = 0,\quad f(2) = 3,\quad f(3) = 2.$
The corresponding APM $F$ is
\begin{align}
F &= \small
\left(
\begin{array}{cccc}
0 & 1 & 0 & 0 \\
1 & 0 & 0 & 0 \\
0 & 0 & 0 & 1 \\
0 & 0 & 1 & 0
\end{array}
\right).
\end{align}
\end{example}

\subsection{Low-Density Parity-Check Matrices}
\label{045449_15Feb25}
Let $\HH \in \Fc_P^{J \times L}$ be a permutation array whose entries are taken from a set of permutations $\Fc_P$. 
This array can also be interpreted as a binary parity-check matrix in $\Fb_2^{JP \times LP}$.
LDPC codes defined by such parity-check matrices are conventionally referred to as protograph codes, APM-LDPC codes, and quasi cyclic (QC)-LDPC codes when the permutations are taken from $\Fc_P$, $\Ac_P$, and $\Qc_P$, respectively.
As an example, see the parity-check matrices $\HH_X$ and $\HH_Z$ in Example~\ref{021639_17Apr25}.
\subsection{Block Cycles}
We regard the parity-check matrix $\HH$ as a binary matrix consisting of $J \times L$ blocks.  
We consider a block path in $\HH$ that alternates horizontally and vertically across the matrix, starting and ending at the same block.  
Such a path is represented as
\[
f_{11} \to f_{12} \to f_{21} \to f_{22} \to \cdots \to f_{n1} \to f_{n2} \to f_{11}
\]
within the following submatrix $S$ of $\HH$:
\begin{align}
S= \left(
\begin{array}{llllll}
  f_{11}        & f_{12}        & *                & *                & *                & *                \\
  *             & f_{21}        & f_{22}           & *                & *                & *                \\
  *             & *             & \ddots           & \ddots           & *                & *                \\
  *             & *             & *                & *                & f_{n1}           & f_{n2}           \\
  f_{n2}        & *             & *                & \cdots           & *                & f_{n1}
\end{array}
\right).
\end{align}
Note that consecutive blocks must differ in both row and column indices. 
For instance, the column indices of $f_{11}$ and $f_{12}$ must be different, and likewise, the row indices of $f_{12}$ and $f_{21}$ must be different.

The sequence $(f_{11}, f_{12}, \dots, f_{n,1}, f_{n,2})$ consists of elements in $\Fc_P$.  
We define the \emph{composite function} associated with the block cycle as
\begin{align}
f_*(j) := \bigl(f_{n2}^{-1} f_{n1} \cdots f_{22}^{-1} f_{21} f_{12}^{-1} f_{11}\bigr)(j), \quad j \in \Zb_P,
\end{align}
where the composition operator $\circ$ is omitted for simplicity.  
Equivalently, the inverse function can be written as
\begin{align}
f_*^{-1}(j) = \bigl(f_{11}^{-1} f_{12} \cdots f_{n1}^{-1} f_{n2}\bigr)(j), \quad j \in \Zb_P.
\end{align}
We call the block cycle \emph{closed} if $f_*(j) = j$ for some $j \in \Zb_P$, and \emph{open} otherwise.  
In particular, if $f_*(j) = j$ for all $j \in \Zb_P$, that is, if $f_* = \mathrm{id}$, the block cycle is said to be \emph{totally closed}.
The existence of a closed block cycle implies the presence of a cycle in the corresponding Tanner graph~\cite{myung2006combining,yoshida2019linear}.  
Conversely, if no closed block cycle exists, then the Tanner graph contains no cycles.
We define the \emph{girth} of a given parity-check matrix as the length of the shortest closed block cycle contained in it.

When $J = 2$, each step in the block cycle alternates between the first and second row blocks, and hence its length is always a multiple of 4.  
In such cases, if we assume that the cycle starts from a block in the upper row and that the first move is in the horizontal direction, the block cycle can be uniquely specified by the sequence of column indices.

Furthermore, if each entry in the first row block is distinct—meaning that it uniquely determines the column in which it appears—then the cycle can be completely described by the sequence of permutation elements.

Let us consider the matrix $\hat{H}_X$ in Example~\ref{021639_17Apr25}.  
Since the entries in the upper row are distinct, each one determines a unique column index.  
Therefore, the following block sequence:
\begin{align}
&\overline{f_0}\to \overline{g_0}\to \underline{g_3}\to \underline{f_0}\to \overline{f_1}\to \overline{g_3}\to \underline{g_2}\to \underline{f_1}
\\&\to \overline{f_2}\to \overline{g_2}\to \underline{g_1}\to \underline{f_2}\to \overline{f_3}\to \overline{g_1}\to \underline{g_0}\to \underline{f_3}\to \overline{f_0}
\end{align}
(where elements from the upper row are denoted with overlines and those from the lower row with underlines)
can be compactly represented by the sequence:
\begin{align}
f_0\to g_0\to f_1\to g_3\to f_2\to g_2\to f_3\to g_1\to f_0.
\end{align}

From the following theorem, it follows that any QC-LDPC code whose parity-check matrix contains a $\Qc_P$-valued submatrix of size at least $2 \times 3$ has girth at most 12.
\begin{teiri}[\cite{1317123}]\label{144806_5Apr25}
Let $a, b, c, d, e, f \in \mathcal{F}_P$ be mutually commuting elements.  
In particular, this commutativity condition is automatically satisfied when $a, b, c, d, e, f \in \Qc_P$.
Consider the following $2 \times 3$ subarray of a larger permutation array $H$:
\begin{align}
\left( 
\begin{array}{ccc}
a & b & c \\
d & e & f
\end{array}
\right).
\end{align}
Then, the block cycle of length 12 formed by the sequence $a \to b \to c \to a \to b \to c \to a$
is a closed. Moreover, it forms a totally closed block cycle.  
We refer to such a block cycle as a \emph{$2 \times 3$ totally closed block cycle}.
\end{teiri}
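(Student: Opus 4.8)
The plan is to translate the compact cycle description into its explicit list of permutation blocks, substitute these into the composite-function formula, and use commutativity to collapse the resulting word to the identity. First I would expand $a\to b\to c\to a\to b\to c\to a$ into the full block cycle: since $J=2$, the path starts in the top row at the block $a$ and alternates horizontal and vertical moves, its horizontal moves visiting the three columns in the cyclic order $1,2,3,1,2,3$ before closing at column~$1$ (consecutive columns always differ, so this is a legitimate block cycle). Walking through it gives the length-$12$ block sequence
\[
a\to b\to e\to f\to c\to a\to d\to e\to b\to c\to f\to d\to a ,
\]
that is, the pairs $(f_{11},f_{12})=(a,b)$, $(f_{21},f_{22})=(e,f)$, $(f_{31},f_{32})=(c,a)$, $(f_{41},f_{42})=(d,e)$, $(f_{51},f_{52})=(b,c)$, $(f_{61},f_{62})=(f,d)$. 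This translation is the only delicate step---one must check that every horizontal step changes the column, every vertical step changes the row, and that the last vertical step returns to $a$---so it is the part I would verify most carefully; everything after it is mechanical.

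Next I substitute into the composite-function formula with $n=6$:
\[
f_*=f_{62}^{-1}f_{61}\,f_{52}^{-1}f_{51}\,f_{42}^{-1}f_{41}\,f_{32}^{-1}f_{31}\,f_{22}^{-1}f_{21}\,f_{12}^{-1}f_{11}
= d^{-1}f\,c^{-1}b\,e^{-1}d\,a^{-1}c\,f^{-1}e\,b^{-1}a .
\]
Inspecting the exponents, each of $a,b,c,d,e,f$ occurs exactly once with exponent $+1$ (as the first member of its pair) and exactly once with exponent $-1$ (as the second member of another pair), so the word is trivial after abelianization. Since the six permutations commute, I may reorder the product freely and cancel each element against its inverse, which yields $f_*=\mathrm{id}$. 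By definition this establishes the ``moreover'' part (total closure); in particular $f_*(j)=j$ for every $j\in\Zb_P$, so the block cycle is closed, and by the fact recalled above---a closed block cycle induces a cycle of the same length in the Tanner graph---the asserted length-$12$ Tanner-graph cycle exists.

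For the parenthetical remark I would note that every element of $\Qc_P$ is a power of the cyclic shift $j\mapsto j+1$ on $\Zb_P$, i.e.\ a translation of $\Zb_P$, and translations commute, so the commutativity hypothesis is automatic whenever the six entries lie in $\Qc_P$. Overall the main obstacle is not mathematical depth but the careful bookkeeping in the first step; the essential point is simply that the word traced by this $2\times 3$ cycle abelianizes to zero, which is exactly what commutativity then exploits.
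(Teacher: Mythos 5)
Your proposal is correct and follows essentially the same route as the paper: you expand the compact column sequence into the same twelve blocks, obtain the same pairs $(a,b),(e,f),(c,a),(d,e),(b,c),(f,d)$, and use commutativity to cancel the word to the identity (you write $f_*$ where the paper writes $f_*^{-1}=a^{-1}b\,e^{-1}f\,c^{-1}a\,d^{-1}e\,b^{-1}c\,f^{-1}d=\mathrm{id}$, which is the same computation). The extra bookkeeping you show for the cycle expansion and the remark that CPMs are powers of the unit shift are consistent with, and merely more explicit than, the paper's one-line proof.
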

\begin{proof}
Since all the elements commute, the following composition reduces to the identity function:
\begin{align}
f_*^{-1}= a^{-1}b  e^{-1} f  c^{-1} a  d^{-1} e  b^{-1} c  f^{-1} d   = \mathrm{id}.
\end{align}
\end{proof}

\subsection{Komoto--Kasai Construction}
In \cite{komoto2024quantumerrorcorrectionnear}, the construction method for Calderbank--Shor--Steane (CSS) codes based on non-binary LDPC codes, originally proposed for $\Qc_P$-valued arrays in \cite{6017122}, was generalized to support $\Fc_P$-valued arrays.
In this framework, we construct the parity-check matrices $H_X$ and $H_Z$ that define a CSS code from two sequences of permutations $\fU := (f_0, \ldots, f_{L/2-1})$ and $\gU := (g_0, \ldots, g_{L/2-1})$ in $\Fc_P^{L/2}$.

The following requirement on $\fU$ and $\gU$ is used to ensure that the resulting matrices $\HH_X$ and $\HH_Z$ are orthogonal \cite{komoto2024quantumerrorcorrectionnear}.
\begin{yosei}\label{065413_4Feb25}
Let $f_0, \ldots, f_{L/2-1},\ g_0, \ldots, g_{L/2-1} \in \Fc_P$.  
Denote $[n]=\{0,1,\ldots,n-1\}$. 
We require that the following commutativity condition holds:
\begin{align}
  f_{\ell-j} \, g_{k-\ell} = g_{k-\ell} \, f_{\ell-j} \quad \text{for all } \ell \in [L/2],\ j,k \in [J]. \label{162154_15Feb25}
\end{align}
 Here, the indices of $f$ and $g$ are understood modulo $L/2$. 
\end{yosei}
This commutativity condition guarantees that each pair of corresponding blocks in $\HH_X$ and $\HH_Z$ ensures the orthogonality.
\begin{df}
 Let $f_0, \dots, f_{L/2-1},\ g_0, \dots, g_{L/2-1} \in \mathcal{F}_P$ be permutations indexed cyclically, i.e., $f_k = f_{L/2 + k}$ and similarly for $g_k$. 
Define the following four $\Fc_P$-valued ${J\times L/2}$ matrix blocks as:
 \begin{align}
 (\hat{H}^{(\mathrm{L})}_X)_{j,l} &= f_{-j + l},& (\hat{H}^{(\mathrm{R})}_X)_{j,l} &= g_{-j + l}, \\
 (\hat{H}^{(\mathrm{L})}_Z)_{j,l} &= f^{-1}_{j - l}, &(\hat{H}^{(\mathrm{R})}_Z)_{j,l} &= g^{-1}_{j - l}.
 \end{align}
 Define the full matrices as:
 \begin{align}
 \hat{H}_X &= (\hat{H}^{(\mathrm{L})}_X \mid \hat{H}^{(\mathrm{R})}_X), \hat{H}_Z = (\hat{H}^{(\mathrm{L})}_Z \mid \hat{H}^{(\mathrm{R})}_Z).
 \end{align}
 Here, the indices $j, l$ are understood modulo $L/2$. 
\end{df}
This construction generalizes the original Hagiwara--Imai construction~\cite{4557323,6017122}, which imposed strong algebraic constraints, to a more flexible framework involving general PMs~\cite{komoto2024quantumerrorcorrectionnear}.
Under the condition in Requirement~\ref{065413_4Feb25}, the matrices $\HH_X, \HH_Z \in \Fb_2^{JP \times LP}$ are orthogonal\footnote{
$(\HH_X\HH_Z^\T)_{jk}=\sum_{i} F_{i-j} G_{k-i} + \sum_{i} G_{i-j} F_{k-i} 
{=} \sum_{i} F_{i-j} G_{k-i} + \sum_{i} F_{k-i} G_{i-j} 
= O.
$}, i.e., $\HH_X \HH_Z^\top = O$.
\begin{example}\label{021639_17Apr25}
Let $J = 2$ and $L = 8$.
Using the APMs or CPMs $f_0, \ldots, f_3$ and $g_0, \ldots, g_3$ specified in Table~\ref{015622_17Apr25}, 
we define the following $J \times L$ protograph-based permutation array:
\begin{align}
\hat{H}_X &=
\left[
\begin{array}{c@{\hspace{6mm}}c@{\hspace{6mm}}c@{\hspace{6mm}}c@{\hspace{6mm}}c@{\hspace{6mm}}c@{\hspace{6mm}}cc}
f_0 & f_1 & f_2 & f_3 & g_0 & g_1 & g_2 & g_3 \\
f_3 & f_0 & f_1 & f_2 & g_3 & g_0 & g_1 & g_2
\end{array}
\right], \\
\hat{H}_Z &=
\left[
\begin{array}{cccccccc}
g_0^{-1} & g_3^{-1} & g_2^{-1} & g_1^{-1} & f_0^{-1} & f_3^{-1} & f_2^{-1} & f_1^{-1} \\
g_1^{-1} & g_0^{-1} & g_3^{-1} & g_2^{-1} & f_1^{-1} & f_0^{-1} & f_3^{-1} & f_2^{-1}
\end{array}
\right].
\end{align}
The arrangement of blocks is carefully designed to satisfy the orthogonality condition $\HH_X \HH_Z^\top = 0$ and to achieve girth 16 and 12.
The construction of the proposed code is described in detail in the next section.
\end{example}

Next, we construct non-binary parity-check matrices $H_\Gamma, H_\Delta \in \mathbb{F}_q^{JP \times LP}$ with $q = 2^e$, and then generate binary parity-check matrices $H_X, H_Z \in \mathbb{F}_2^{eJP \times eLP}$ by replacing each non-zero element with the $e \times e$ companion matrix over $\mathbb{F}_2$ associated with the corresponding finite field element, or its transpose. For further details on this conversion procedure, we refer the reader to \cite{komoto2024quantumerrorcorrectionnear,6017122}.
The CSS code defined by the binary parity-check matrices $H_X$ and $H_Z$—that is, the codes $C_X$ and $C_Z$ they define—is the quantum code used in this study for quantum error correction.
\section{Proposed Construction Method of $C_X,C_Z$}\label{sec:proposed_construction}
When attempting to satisfy the orthogonality condition required by the CSS code construction (Requirement~\ref{065413_4Feb25}), 
totally closed block cycles of length $2L$ inevitably arise~\cite{komoto2025explicitconstructionclassicalquantum}.  
Therefore, in order to construct codes with girth at least 16, it is necessary to set $L \ge 8$.
In this paper, we focus on the case $L = 8$ as a concrete instance. 
The proposed method closely follows Komoto--Kasai construction~\cite{komoto2024quantumerrorcorrectionnear}, except for the design of the matrices $\HH_X$ and $\HH_Z$, which requires a specific modification.

As shown in Theorem~\ref{144806_5Apr25}, any parity-check matrix that includes a $\Qc_P$-valued $2 \times 3$ subarray necessarily has Tanner graph girth at most 12.  
Therefore, to overcome the inherent girth limitation of $\Qc_P$, we take a decisive step forward and harness the full flexibility of $\Ac_P$ to construct our matrices.
\begin{table}[htbp]\label{015542_17Apr25}
\caption{APMs and CPMs $(f_i, g_i)$ used in the proposed and conventional constructions \cite{komoto2025explicitconstructionclassicalquantum} for $P = 12600$.}
\label{015622_17Apr25}
\begin{align}
\renewcommand{\arraystretch}{1.2}
\begin{array}{c||c|c||c|c}
&\multicolumn{2}{c||}{\text{Prop. code  with girth 16}} & \multicolumn{2}{c}{\text{Conv. code with girth 12}} \\\hline
i & f_i(x) & g_i(x) & f_i(x) & g_i(x) \\\hline
0&3151X+7075&6301X+5178&X+4375&X+2833\\
1&9451X+6495&5041X+9360&X+11775&X+11168\\
2&7351X+1295&X+4584&X+7825&X+6792\\
3&10501X+3540&7561X+5784&X+11351&X+ 3961\\
\end{array}
\end{align}
\end{table}

We aim to construct a pair of sequences of affine permutations
\[
\fU = (f_0, f_1, \ldots, f_{L/2 - 1}), \quad \gU = (g_0, g_1, \ldots, g_{L/2 - 1}),
\]
that satisfy the following criteria:
\begin{itemize}
  \item[(a)] Each pair $(f_i, g_i)$ should satisfy the commutativity condition required in Requirement~\ref{065413_4Feb25}.
  \item[(b)] To avoid the formation of $2 \times 3$ totally closed block cycles (see Theorem~\ref{144806_5Apr25}), commutativity among the $f_i$'s or among the $g_i$'s must be avoided.  
  Specifically, the functions $f_i$ and $f_j$ ($i \ne j$), and likewise $g_i$ and $g_j$, should not commute.
  \item[(c)] The constructed matrices $\HH_X$ and $\HH_Z$ must be free of any closed block cycles of length up to 12 in order to ensure the desired girth property.
\end{itemize}
\textit{Note:} Although condition (a) may appear to be implied by (c), it plays a critical role in the construction process.  
In practice, if condition (a) is not explicitly enforced, it becomes extremely difficult to find sequences satisfying (c) via random sampling.

To construct such sequences $\fU$ and $\gU$, we use a sequential randomized algorithm that adds each function one at a time while checking the required constraints.  
Starting from an empty list, we iteratively select candidates for $f_i$ and $g_i$ by sampling from $\mathcal{A}_P$ and verifying that the resulting partial sequences do not violate conditions~(a), (b), or (c).  
If a candidate does not violate any constraints, it is accepted into the sequence; otherwise, a new candidate is drawn.  
This process is repeated until valid sequences $\fU$ and $\gU$ of length $L/2$ are constructed.  
The precise procedure is described in Algorithm~\ref{alg:fg_construction}.

\renewcommand\algorithmicindent{.5em}
\begin{algorithm}[thbp]
\caption{Sequential Construction of $\fU, \gU$}
\label{alg:fg_construction}
\begin{algorithmic}[1]
\State Initialize empty list $\mathcal{S} \gets [\ ]$
\For{$i = 0$ to $L/2 - 1$}
    \Repeat
        \State Randomly generate a candidate $f_i$ from $\mathcal{A}_P$
        \State Temporarily set $\mathcal{S}' \gets \mathcal{S} \cup \{f_i\}$
        \If{$\mathcal{S}'$ does not violate the criteria (a),(b), or (c)}
            \State Accept $f_i$: $\mathcal{S} \gets \mathcal{S}'$
            \State \textbf{break}
        \EndIf
    \Until{a valid $f_i$ is found}
    \Repeat
        \State Randomly generate a candidate $g_i$ from $\mathcal{A}_P$
        \State Temporarily set $\mathcal{S}' \gets \mathcal{S} \cup \{g_i\}$
        \If{$\mathcal{S}'$ does not violate the criteria (a),(b), or (c)}
            \State Accept $g_i$: $\mathcal{S} \gets \mathcal{S}'$
            \State \textbf{break}
        \EndIf
    \Until{a valid $g_i$ is found}
\EndFor
\State \textbf{return} $\mathcal{S}$
\end{algorithmic}
\end{algorithm}

\section{Numerical Results}\label{sec:results}

Using the algorithm described in the previous section,
we constructed sequences $\fU$ and $\gU$,
and the corresponding matrices $\HH_X$ and $\HH_Z$,
for parameters $J = 2$, $L = 8$, and $P = 12600$\footnote{
Although choosing $P$ as a power of two would be practical,
we observed that it made it difficult to satisfy both constraints (a) and (b).
We found that $P$ needs to contain several small prime factors to admit valid sequences,
and successfully constructed such sequences when $P$ was a multiple of $6300 = 2 \cdot 3^2 \cdot 5 \cdot 7$.
}.

Table~\ref{015542_17Apr25} lists the pairs $(f_i, g_i)$ of APMs used in the proposed method,
as well as those used in the conventional CPM-based method.

We set $e = 8$ and $q = 2^e$,
and used the primitive polynomial $x^8 + x^4 + x^3 + x^2 + 1$ over $\mathbb{F}_2$ to define the finite field $\mathbb{F}_q$.
The non-binary $\Fb_q$-valued parity-check matrices $H_\Gamma$ and $H_\Delta$ were constructed from $\HH_X$ and $\HH_Z$
by assigning nonzero elements in $\mathbb{F}_q$ according to the block structure.
Each nonzero element was replaced by the corresponding $e \times e$ binary companion matrix~\cite{komoto2025explicitconstructionclassicalquantum},
or its transpose, to obtain the binary matrices $H_X$ and $H_Z$.

We conducted numerical experiments using joint belief propagation decoding~\cite{komoto2025explicitconstructionclassicalquantum}
over depolarizing channels with error probability~$p$,
and marginal error probability $f_\mathrm{m} = \frac{2p}{3}$.
Decoding was considered successful when the noise pattern was correctly estimated;
otherwise, it was deemed a failure.
Degenerate errors were not considered.

\begin{figure}
  \centering
  \includegraphics[width=1.00\linewidth]{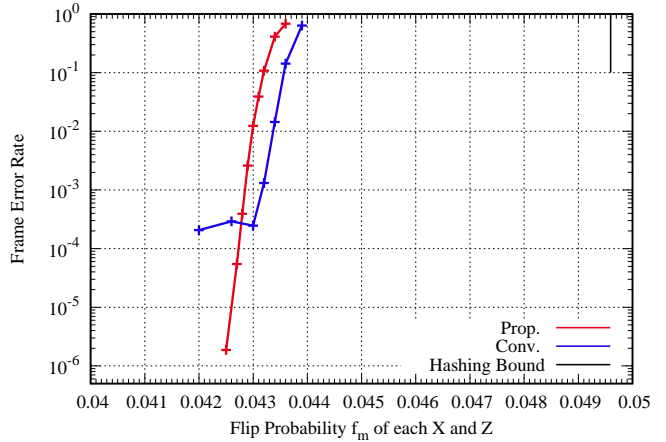}
  \caption{
  Frame error rate comparison between the proposed $[[806400, 403200, \le\! 14]]$ code and the conventional $[[806400, 403200,\allowbreak \le 9]]$ code~\cite{komoto2024quantumerrorcorrectionnear}.
  }
  \label{fig:comparison}
\end{figure}

The proposed method exhibits slightly inferior performance in the waterfall region compared to the conventional method.
This behavior is consistent with the well-known trade-off in LDPC code design:
algebraically constructed codes achieve better error floor performance but may perform worse in the waterfall region~\cite{lin2004structured}.
Conversely, randomly constructed LDPC codes generally perform well in the waterfall region but often suffer from higher error floors.

As discussed later,
the proposed codes achieve a larger upper bound on the minimum distance,
which we believe to be tight.
Thus, we expect the proposed codes to eventually outperform the conventional ones in the error floor region.

In our numerical experiments,
the conventional codes exhibited an observable error floor around a frame error rate of $10^{-4}$,
whereas the proposed codes did not exhibit any noticeable error floor down to at least $10^{-6}$.
\footnote{In the initial version of our arXiv submission, an undetected bug in the error floor detection process led us to incorrectly conclude that the conventional codes exhibited no error floor.
This issue has been identified and rectified in the present study.}
\subsection{Computing Upper Bound of Minimum Distance}
For the proposed code $(C_X, C_Z)$ constructed with girth~16, we first enumerated all shortest cycles.  
For example, consider a submatrix $S$ of $\Fb_q$-valued parity-check matrix $H_\Gamma$ corresponding to a length-16 cycle, such as the following:
\[
S=\left(
\begin{array}{cccccccc}
\gamma_{11}& \gamma_{12}&0            &0            &0            \\
0          & \gamma_{21}& \gamma_{22} &0            &0            \\
0          &0           & \ddots      & \ddots      &0            \\
0          &0           &0            & \gamma_{71} & \gamma_{72} \\
\gamma_{82}&0           &0            &0            & \gamma_{81}
\end{array}
\right).
\]
Let $\vU \in \mathbb{F}_q^8$ be a nonzero vector in the null space of $S$.  
By padding zeros at positions outside the columns of $S$, $\vU$ can be embedded into a length-$PL$ vector over $\mathbb{F}_q$, which belongs to $C_X$.  
The null space of $S$ has dimension either 0 or 1.  
In the former case, it contains only the zero vector; in the latter case, it contains $q - 1$ nonzero codewords.
In total, we found 7190 and 7256 such length-16 cycles whose corresponding submatrices have null space dimension 1 in $C_X$ and $C_Z$, respectively.
By computing the bitwise weights of such codewords that do not belong to $C_Z^\perp$, we obtained the weight distribution $A_X(w)$, as listed in Table~\ref{weight_dist}.  
 Similarly, we computed $A_Z(w)$ from the null spaces of submatrices of $H_\Delta$ corresponding to length-16 cycles.
For comparison, we also computed the weight distributions $A_X(w)$ and $A_Z(w)$ for the conventional codes $(C_X, C_Z)$ constructed with girth~12.
From the table, we obtain upper bounds on the minimum bitwise distances of 14 for the proposed code and 9 for the conventional code.  

From the table, we observe that the minimum bitwise weight among these codewords is 14.  
However, this upper bound on the minimum distance may not be exact, since it is possible that other codewords in $C_X$—arising from longer cycles or combinations of cycles in the Tanner graph—have bitwise weights smaller than those enumerated here, even though their weights in $\mathbb{F}_q$ are at least 9.  
Nevertheless, due to the use of a relatively large finite field size $q$, we believe that this upper bound is likely to be tight.
\begin{table}[thbp]
  \centering
\caption{Weight distribution of codewords formed from the shortest cycles. Here, $A_X(w)$ and $A_Z(w)$ denote the number of codewords of weight $w$ in $C_X \setminus C_Z^\perp$ and $C_Z \setminus C_X^\perp$, respectively.}
\label{weight_dist}
  \renewcommand{\arraystretch}{0.95}
\begin{tabular}{c|c|c||c|c|c}
\multicolumn{3}{c||}{\text{Prop. }}&\multicolumn{3}{c}{\text{Conv. }}  \\\hline
    \hline
    $w$ & $A_X(w)$ & $A_Z(w)$&w& $A_X(w)$ & $A_Z(w)$\\
    \hline
    14  & 0     & 2     &     9  & 0     & 7     \\
    15  & 7     & 4     &    10  & 3     & 10    \\
    16  & 19    & 16    &    11  & 23    & 36    \\
    17  & 64    & 74    &    12  & 115   & 117   \\
    $\vdots$&$\vdots$&$\vdots$&$\vdots$&$\vdots$&$\vdots$\\
     50  & 4     & 7      &    38  & 17    & 19    \\
     51  & 1     & 1      &    39  & 5     & 3     \\
     52  & 0     & 1      &    40  & 0     & 2     \\
 \hline\hline
 total & 255$\times$ 7192 & 255$\times$ 7256 & total  & 255$\times$ 3155	 & 255$\times$ 3063
  \end{tabular}
\end{table}

\section{Conclusion and Future Work}
\label{sec:conclusion}
We proposed a method for constructing quantum LDPC codes with Tanner graph girth~16
using affine permutation matrices and a randomized sequential selection process.
Unlike conventional constructions limited to girth~12, our method effectively eliminates
short cycles that degrade decoding performance.

Numerical experiments showed that the proposed codes reduce the number of low-weight codewords, leading to a substantial improvement
in the error floor region, although with slightly inferior performance in the waterfall region.
The resulting CSS codes achieve a higher upper bound on the minimum distance, which we conjecture to be tight.

Overall, our findings suggest that exploiting affine permutation matrices combined with randomized construction
is an effective strategy to mitigate short cycles and improve error floor performance in quantum LDPC codes.

Future work includes extending the construction to larger $L$, applying spatial coupling techniques~\cite{CSS_SC_ISIT},
and developing analytical methods to predict error floor behavior based on low-weight codewords
and small trapping sets.
It is also conceivable to extend the explicit construction method of~\cite{komoto2025explicitconstructionclassicalquantum}
to affine permutation matrices.
We also plan to apply the method proposed in~\cite{kasai2025efficient} to further enhance
error floor performance.

\section*{Acknowledgment}
This study was carried out using the TSUBAME4.0 supercomputer at Institute of Science Tokyo.
\bibliographystyle{IEEEtran}
\bibliography{IEEEabrv,ref} 
\end{document}